\documentclass[journal]{IEEEtran}

\usepackage{amsmath,amsfonts, amssymb,amsthm}

\newtheorem{thm}{Theorem}
\newtheorem{prop}{Proposition}

\newtheorem{lemma}{Lemma}

\def\be{\begin{equation}}
\def\en{\end{equation}}
\def\bee{\begin{eqnarray*}}
\def\ene{\end{eqnarray*}}

\def\E{{\bf E}}

\def\R{{\mathbb{R}}}
\def\E{{\mathbb{E}}}

\date{}

\vspace{2in}

\hyphenation{op-tical net-works semi-conduc-tor}

\begin{document}

\title{Variants of the entropy power inequality}

\author{Sergey G. Bobkov$^1$\thanks{${}^1$School of Mathematics, 
University of Minnesota, 127 Vincent Hall, 206 Church St. S.E., Minneapolis, 
MN 55455 USA. Email: bobkov@math.umn.edu 
\newline Partially supported by the Alexander von Humboldt Foundation and 
NSF grant DMS-1612961} and 
Arnaud Marsiglietti$^2$\thanks{$^2$California Institute of Technology, 
1200 East California Boulevard, MC 305-16, Pasadena, CA 91125 USA. 
Email: amarsigl@caltech.edu 
\newline Supported by the Walter S. Baer and Jeri Weiss CMI Postdoctoral Fellowship.}}

\maketitle

\begin{abstract}

An extension of the entropy power inequality to the form 
$N_r^\alpha(X+Y) \geq N_r^\alpha(X) + N_r^\alpha(Y)$
with arbitrary independent summands $X$ and $Y$ in $\R^n$ is obtained 
for the R\'enyi entropy and powers $\alpha \geq (r+1)/2$.

\end{abstract}

\begin{IEEEkeywords}

Entropy power inequality, R\'enyi entropy.

\end{IEEEkeywords}

\section{Introduction}\label{1}

Given a continuous random vector $X$ in $\R^n$ with density $f$, define
the (Shannon) entropy and the associated entropy power
\bee
h(X)
 & = & 
-\int_{\R^n} f(x)\log f(x)\,dx, \\
N(f)
 & = &
N(X) \ = \ \exp\Big\{\frac{2}{n}\,h(X)\Big\}.
\ene
Serving as measures of ``chaos" or ``randomness" hidden in the distribution 
of $X$, these functionals possess a number of remarkable properties, especially
when they are considered on convolutions. For example, we have the famous
entropy power inequality (EPI), fundamental in Information Theory. It states that
\be\label{EPI}
N(X+Y) \geq N(X) + N(Y),
\en
for arbitrary independent summands $X$ and $Y$ in $\R^n$ whenever the involved
entropies are well defined (cf. \cite{Sh}, \cite{St}).
Several proofs of the EPI exist 
(see e.g. \cite{L}, \cite{D-C-T}, \cite{S-V}, \cite{V-G}, \cite{R}, \cite{W-M}), 
as well as refinements (see e.g. \cite{A-B-B-N}, \cite{M-B}, \cite{Co}). 
We refer to the survey \cite{M-M-X} for further details.
Moreover, when a Gaussian noise is added to $X$, i.e., if $Y = \sqrt{t}\, Z$ 
with $Z$ standard normal, the random vector $X+\sqrt{t}\, Z$ has
density $f_t$ whose entropy power is a concave function in $t$, so that
\be\label{concave}
\frac{d^2}{dt^2}\, N(f_t) \leq 0 \qquad (t>0).
\en
This observation due to Costa \cite{C}, which strengthens \eqref{EPI} 
in the special case where $Y$ is Gaussian, is known as the concavity of 
entropy power theorem (cf. also \cite{D}, \cite{V}).

There has been large interest in extending such properties to more general 
informational functionals, in particular, to the R\'enyi entropy and 
R\'enyi entropy power
\bee
h_r(X) 
 & \!\!\!= & \!\!
-\frac{1}{r-1}\,\log \int_{\R^n} f(x)^r\,dx, \\
N_r(X) 
 & \!\!\!= & \!\!
\exp\Big\{\frac{2}{n}\,h_r(X)\Big\} = 
\bigg(\int_{\R^n} f(x)^r\,dx\bigg)^{-\frac{2}{n}\,\frac{1}{r-1}}
\ene
of a fixed order $r>0$, or by some natural functionals of $h_r$ and $N_r$. 
As one interesting example, for the densities $u(x,t) = f_t(x)$ solving 
the nonlinear heat equation $\frac{\partial}{\partial t}\, u = \Delta u^r$ 
with $r > 1 - \frac{2}{n}$, Savar\'e and Toscani \cite{S-T} have extended 
property \eqref{concave} to the functional $N_r^\alpha$ in place of $N$, where 
$\alpha = 1 + \frac{n}{2}\,(r-1)$. Therefore, in this PDE context, 
it is natural to work with
$$
\widetilde N_r(X) = 
\bigg(\int_{\R^n} f(x)^r\,dx\bigg)^{-\frac{2}{n}\,\frac{1}{r-1} - 1},
$$
called the $r$-th R\'enyi power in \cite{S-T}. Although the solutions $f_t$
lose the convolution structure, one may wonder whether or not
the Savar\'e-Toscani entropy power $\widetilde N_r$ shares the EPI \eqref{EPI} 
as well. Here we give an affirmative answer to this question, including sharper 
powers of $N_r$.

\begin{thm}\label{main-thm}
Given independent random vectors $X$ and $Y$ in $\R^n$ with densities, we have
\be\label{main}
N_r^\alpha(X+Y) \geq N_r^\alpha(X) + N_r^\alpha(Y)
\en
whenever $\alpha \geq \frac{r+1}{2}$ $(r > 1)$.
\end{thm}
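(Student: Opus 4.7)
My plan is to derive \eqref{main} from the sharp Young convolution inequality of Brascamp--Lieb--Beckner.  Recall that for all $p,q \ge 1$ with $\tfrac{1}{p}+\tfrac{1}{q}=1+\tfrac{1}{r}$,
\begin{equation*}
\|f*g\|_r \,\le\, (C_p C_q/C_r)^{n/2}\,\|f\|_p\,\|g\|_q,
\qquad C_s = s^{1/s}/(s^*)^{1/s^*},
\end{equation*}
where $s^*$ denotes the H\"older conjugate of $s$. Introduce the parametrization $\lambda \in [0,1]$ with $\tfrac{1}{p} = (1-\lambda)+\tfrac{\lambda}{r}$, $\tfrac{1}{q}=\lambda+\tfrac{1-\lambda}{r}$, under which $p,q \in [1,r]$ and the Young condition is automatic. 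A short computation gives $r^*/p^* = \lambda$ and $r^*/q^* = 1-\lambda$, and the identity $\log\|f\|_s = -(n(s-1)/(2s))\log N_s(X)$ then translates the Young bound into
\begin{equation*}
N_r(X+Y) \,\ge\, K(\lambda)^{-r^*}\, N_p(X)^{\lambda}\, N_q(Y)^{1-\lambda},
\qquad K(\lambda) := C_p C_q/C_r.
\end{equation*}

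Since $p, q \le r$, the monotonicity of the R\'enyi entropy power in its order yields $N_p(X)\ge N_r(X)$ and $N_q(Y)\ge N_r(Y)$, hence for every $\lambda \in [0,1]$,
\begin{equation*}
N_r(X+Y) \,\ge\, K(\lambda)^{-r^*}\, N_r(X)^{\lambda}\, N_r(Y)^{1-\lambda}.
\end{equation*}
Raising to the power $\alpha$ and optimizing $\lambda$ as a function of the ratio $N_r(X)/N_r(Y)$ should then produce a lower bound dominating $N_r^\alpha(X)+N_r^\alpha(Y)$. In the symmetric case $N_r(X) = N_r(Y)$, symmetry forces $\lambda^* = \tfrac{1}{2}$, so $p = q = 2r/(r+1)$; a direct calculation gives $K(\tfrac{1}{2}) = 2^{2/r}r/(r+1)^{(r+1)/r}$, and the problem reduces to the numerical inequality $K(\tfrac{1}{2})^{-\alpha r^*} \ge 2$, equivalent to $(r+1)\log(r+1) - r\log r \ge (4r/(r+1))\log 2$; this holds with equality at $r=1$ and is strict for $r>1$, confirming that $\alpha = (r+1)/2$ is the critical exponent. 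The general asymmetric case then follows from an analogous optimization in $\lambda$.

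The principal obstacle is the conversion of the \emph{multiplicative} weighted-geometric-mean bound into the \emph{additive} form $N_r^\alpha(X)+N_r^\alpha(Y)$: one must verify that the sharp Young constant $K(\lambda)$, balanced against the optimal choice $\lambda^* = \lambda^*(N_r(X), N_r(Y))$, always delivers at least the arithmetic sum for every pair of positive values. A secondary worry is that the R\'enyi-monotonicity step $N_p \ge N_r$ is lossy for non-uniform densities; should this loss obstruct the argument for extreme ratios $N_r(X)/N_r(Y)$, one could instead retain $N_p, N_q$ on the right and invoke a finer comparison, e.g.\ against the generalized Gaussian (Barenblatt) profiles that appear in the Savar\'e--Toscani framework and are the natural extremizers of Young's inequality.
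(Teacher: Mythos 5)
Your setup is exactly the paper's (Lieb's route): sharp Young with optimal constants, the reformulation as the geometric-mean bound \eqref{eq:young}, and the monotonicity step $N_p\geq N_r$, $N_q\geq N_r$; your treatment of the symmetric case $N_r(X)=N_r(Y)$ with $\lambda=\tfrac12$, $p=q=\tfrac{2r}{r+1}$ also reproduces the paper's Section~\ref{3} and the correct numerical criterion. But the proof stops precisely where the theorem begins: the asymmetric case is dispatched with ``optimizing $\lambda$ \dots should then produce'' and ``follows from an analogous optimization,'' and you yourself flag this conversion from a weighted geometric mean to the sum $N_r^\alpha(X)+N_r^\alpha(Y)$ as the principal obstacle. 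That obstacle is the entire content of the theorem: the equal-powers case does not imply the general one (the ratio $N_r(X)/N_r(Y)$ is a genuine free parameter, not removable by scaling), and exact optimization in $\lambda$ leads to an implicit transcendental equation that the paper explicitly avoids (``the best values of $p$ and $q$ can be described implicitly \dots and we prefer to take some specific values''). Your fallback suggestion (retaining $N_p,N_q$ and comparing with Barenblatt profiles) is not carried out and does not obviously help, since the loss in the monotonicity step is not the issue for the critical exponent.

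What actually closes the gap in the paper is a specific, deliberately suboptimal choice of the Young parameters tied to the data: after raising \eqref{eq:young} to the power $\alpha$ and normalizing $x=N_r^\alpha(X)$, $y=N_r^\alpha(Y)$ so that $x+y=\tfrac{1}{r'}$, one takes $\tfrac{1}{p'}=x$, $\tfrac{1}{q'}=y$ (Lemma~\ref{cle}); the claim then reduces to the two-variable inequality \eqref{4.5} with $\beta=\tfrac{\alpha}{\alpha-1}$, and a calculus lemma (Lemma~\ref{key}) shows that for $\beta\geq\tfrac{2}{c}-1$ the left-hand side is minimized at the endpoints or the center, and exactly at the borderline $\beta=\tfrac{2}{c}-1$ --- which corresponds to $\alpha=\tfrac{r+1}{2}$ --- at the endpoints, where equality holds. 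This endpoint/center analysis is what turns the multiplicative bound into the additive one, and it is absent from your proposal; without it (or a worked-out substitute for the full range of ratios), the argument proves only the equal-entropy-power case, i.e.\ Proposition~\ref{equal}, not Theorem~\ref{main-thm}.
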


Letting $r \downarrow 1$, inequality \eqref{main} returns us to the classical EPI.
This inequality is getting sharper when $r$ is fixed and $\alpha$ decreases. Anyhow, \eqref{main} is no longer true for $\alpha = 1$ like in \eqref{EPI}.
For the range $r > 3$, this fact was mentioned in \cite{B-C} in case where 
both $X$ and $Y$ are uniformly distributed. As we will see,
\eqref{main} may be violated with $\alpha = 1$ for any $r>1$, even when one of 
the summands is normally distributed (that is, for the densities $f_t$ 
in the heat semigroup model).

For $r=\infty$, a R\'enyi entropy power inequality of the form \eqref{main} cannot 
hold, for any $\alpha$. Indeed, if we take $X$ and $Y$ uniformly distributed 
on $[0,1]$, then $N_{\infty} (X+Y) = N_{\infty}(X)$. We refer to 
\cite{B-C2}, \cite{M-M-X2} for recent developments on $N_{\infty}$. 
While there has been several results about the R\'enyi entropy power of order 
$r \geq 1$, the investigation of a R\'enyi entropy power inequality for 
the R\'enyi entropy of order $r < 1$ has been addressed only very recently (see \cite{MM}).

In the proof of \eqref{main} we follow an approach of Lieb \cite{L}, employing 
Young's inequality with best constants. Although the basic argument is rather 
standard, we recall it in the next section. In our situation it leads to some 
routine calculus computations, so we move the involved analysis to separate 
sections (starting with the case of equal entropy powers). In Section \ref{5}, 
we analyze \eqref{main} with $\alpha = 1$ and show that this inequality cannot 
be true in general. In Section \ref{6} we provide a simple lower bound
on the optimal exponent $\alpha = \alpha(r)$ in \eqref{main}.
Finally, in Section \ref{7}, we conclude with remarks on the monotonicity
of R\'enyi's entropy along rescaled convolutions.

\section{Information-theoretic formulation of Young's inequality}\label{2}

The Young inequality with optimal constants (due to Beckner \cite{B1} and
Brascamp and Lieb \cite{B-L}) indicates that, for any two independent random 
vectors $X$ and $Y$ in $\R^n$ with densities $f$ and $g$, respectively, 
and for all parameters $p,q,r \geq 1$ such that
\be\label{conjugate}
\frac{1}{p'} + \frac{1}{q'} = \frac{1}{r'},
\en
we have
\be\label{sharp-young}
\|f * g\|_r \,\leq\, C^{\frac{n}{2}}\, \|f\|_p\, \|g\|_q
\en
with
\be\label{constant}
C = C(p,q,r) = \frac{c_p c_q}{c_r}, \qquad {\rm where} \ \ 
c_\alpha = \frac{\alpha^{1/\alpha}}{(\alpha')^{1/\alpha'}}.
\en
As usual, $f * g$ denotes the convolution,
$p' = \frac{p}{p-1}$ is the conjugate power, and
$$
\|f\|_p = \bigg(\int_{\R^n} f(x)^p\,dx\bigg)^{1/p}
$$
stands for the $L^p$-norm of a non-negative function $f$ on $\R^n$ with respect 
to the Lebesgue measure. In general, we have $C \leq 1$, with equality in 
\eqref{sharp-young} attainable for Gaussian densities (the traditional Young 
inequality is formulated without this constant, so, in a weaker form).

Since $\|f\|_r = N_r(X)^{-\frac{n}{2r'}}$, the inequality \eqref{sharp-young} 
may be stated as a dimension-free relation between the corresponding entropy 
powers, namely
\be\label{dimension-free}
N_r(X+Y)^{\frac{1}{r'}} \geq 
\frac{1}{C}\,N_p(X)^{\frac{1}{p'}} N_q(Y)^{\frac{1}{q'}}.
\en
This is an equivalent information-theoretic formulation of Beckner's result, 
specialized to the class of probability densities,
which appears, for example, in the book by Cover and Thomas \cite{C-T}
(in a slightly different form, cf. Theorem 17.8.3, p. 677).

It is natural to have an analog of \eqref{dimension-free} for one functional 
$N_r$ only (rather than for three parameters). This can be done on the basis 
of \eqref{sharp-young} by noting that, due to Jensen's (or H\"older's) inequality, 
and since $p,q \leq r$ in \eqref{conjugate} and $f$ is a probability density 
function,
$$
\|f\|_p^p \, \leq \, 
\|f\|_1^{1 - \frac{p-1}{r-1}} \|f\|_r^{r\frac{p-1}{r-1}} \, = \,
\|f\|_r^{r\frac{p-1}{r-1}} \qquad (r>1).
$$
As an alternative approach, one can just use the monotonicity of the function
$r \rightarrow N_r$, which follows, for example, from the representation
$$
N_r^{-\frac{n}{2}}(X) \, = \, \big[\,\E\,f(X)^{r-1}\big]^{\frac{1}{r-1}}.
$$
Hence $N_p \geq N_r$, $N_q \geq N_r$ in \eqref{dimension-free}, and 
with these bounds it immediately yields:

\begin{prop}\label{young}
Given independent random vectors $X$ and $Y$ in 
$\R^n$ with densities, we have
\be\label{eq:young}
N_r(X+Y)^{\frac{1}{r'}} \, \geq \,
\frac{1}{C}\,N_r(X)^{\frac{1}{p'}}\, N_r(Y)^{\frac{1}{q'}},
\en
which holds true for all $p,q,r \geq 1$ subject to \eqref{conjugate}
with constant $C = C(p,q,r)$ as in \eqref{constant}.
\end{prop}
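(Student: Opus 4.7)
The plan is to apply the sharp Young inequality (\ref{sharp-young}) directly, rewrite it in terms of R\'enyi entropy powers, and then collapse the three parameters $p, q, r$ down to the single parameter $r$ by using monotonicity of $s \mapsto N_s$.

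For the first step, I would start from the identity $\|f\|_r = N_r(X)^{-\frac{n}{2r'}}$ already recorded in the excerpt (valid whenever $f$ is a probability density). Applied to the densities of $X$, $Y$, and $X+Y$ in (\ref{sharp-young}), this gives
\[
N_r(X+Y)^{-\frac{n}{2r'}} \leq C^{\frac{n}{2}}\, N_p(X)^{-\frac{n}{2p'}}\, N_q(Y)^{-\frac{n}{2q'}}.
\]
Raising both sides to the power $-\frac{2}{n}$ and rearranging produces the three-parameter bound (\ref{dimension-free}).

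For the second step, the conjugate relation (\ref{conjugate}), together with $p, q, r \geq 1$, forces $\frac{1}{p'} \leq \frac{1}{r'}$ and $\frac{1}{q'} \leq \frac{1}{r'}$, hence $p \leq r$ and $q \leq r$. The Jensen-type representation
\[
N_s^{-\frac{n}{2}}(X) = \big[\,\E\, f(X)^{s-1}\big]^{\frac{1}{s-1}}
\]
noted in the excerpt shows that $s \mapsto N_s(X)$ is non-increasing on $(1,\infty)$, so $N_p(X) \geq N_r(X)$ and $N_q(Y) \geq N_r(Y)$. Since the exponents $\frac{1}{p'}, \frac{1}{q'}$ appearing on the right-hand side of (\ref{dimension-free}) are non-negative, substituting these lower bounds weakens but does not reverse the inequality, delivering exactly (\ref{eq:young}).

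The main obstacle here is essentially bookkeeping rather than analysis: one must verify that the monotonicity direction of $N_s$ aligns with the sign of the exponents so the substitution goes the correct way, and check that the sharp constant $C$ from Young's inequality carries through the algebraic rewriting intact. There is no genuine analytic difficulty beyond what has already been set up in the preceding paragraphs of Section \ref{2}, so the proof should consist essentially of writing down these two manipulations in sequence.
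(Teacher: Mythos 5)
Your proposal is correct and follows essentially the same route as the paper: rewrite the sharp Young inequality \eqref{sharp-young} via $\|f\|_r = N_r(X)^{-\frac{n}{2r'}}$ to get \eqref{dimension-free}, then use $p,q\leq r$ together with the monotonicity of $s\mapsto N_s$ (one of the two justifications the paper itself offers) to replace $N_p, N_q$ by $N_r$. No gaps; the bookkeeping on the signs of the exponents $\frac{1}{p'},\frac{1}{q'}$ is exactly what is needed.
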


A weak point of this inequality is however the loss of equality for 
Gaussian densities. Nevertheless, there is still freedom to optimize
the right-hand side over all admissible couples $(p,q)$, or to choose 
specific values, even if they are not optimal.

Notice that by Jensen's inequality, we always have
$$ 
N_r(X+Y) \geq \max\{N_r(X), N_r(Y)\}, 
$$
hence inequality \eqref{main} trivially holds if $N_r(X)N_r(Y)=0$. 
Therefore, one may assume without loss of generality that $N_r(X)N_r(Y)>0$, 
and we will implicitly make this assumption in the next sections.

\section{The case of equal entropy powers}\label{3}

Let us illustrate this approach in the simpler situation of equal R\'enyi 
entropies. When $N_r(X) = N_r(Y) = N$, inequality \eqref{eq:young} 
is simplified to
\be\label{simple}
N_r(X+Y) \geq C^{-r'} N, 
\en
and our task reduces to the minimization of $C$ 
as a function of $(p,q)$ for a fixed $r > 1$. 
Putting $x = \frac{1}{p'}$, $y = \frac{1}{q'}$, so that
$\frac{1}{p} = 1 - \frac{1}{p'} = 1-x$ and 
$\frac{1}{q} = 1-\frac{1}{q'} = 1-y$, 
from \eqref{constant},
\begin{equation}\label{C}
\begin{aligned}
\frac{1}{C} \, = \, \frac{c_r}{c_p c_q} \, = & \,
c_r\, \frac{(\frac{1}{p})^{1/p}}{(\frac{1}{p'})^{1/p'}}\,  
\frac{(\frac{1}{q})^{1/q}}{(\frac{1}{q'})^{1/q'}} \, \\ = & \,
c_r\,\frac{(1-x)^{1-x}}{x^x} \,\frac{(1-y)^{1-y}}{y^y}.
\end{aligned}
\end{equation}
Hence, we need to maximize the quantity
\bee
\psi(x) 
 & = &
\log \frac{1}{C} \\
 & = &
\log c_r - \big(x\log x - (1-x)\log(1-x)\big) \\ 
 & ~ & - \big(y\log y - (1-y)\log(1-y)\big) 
\ene
subject to the constraint \eqref{conjugate}, that is, for $x,y \geq 0$, 
$x+y = \frac{1}{r'}$, or equivalently, on the interval $0 \leq  x \leq \frac{1}{r'}$.
At the endpoints, we have $\psi(0) = \psi(1/r') = 0$, while inside the interval
$$
\psi'(x) = \log \frac{y(1-y)}{x(1-x)} \, = \, 0
$$
if and only if $ y(1-y) = x(1-x)$. This equation is solved either as
$y = x = \frac{1}{2r'}$ or as $y=1-x$. But the latter contradicts 
$x+y = \frac{1}{r'} < 1$. Moreover, 
$$
\psi''(x) = \frac{2x-1}{x(1-x)} + \frac{2y-1}{y(1-y)}
$$
is negative at $x_0=\frac{1}{2r'}$, which implies that $x_0$ 
is the point of maximum of the function $\psi$.

Thus, the coefficient $\frac{1}{C}$ in \eqref{eq:young} is maximized, when 
$p' = q' = 2r' = \frac{2r}{r-1}$. For these values,
$p = q = \frac{2r}{r+1}$,
so
$$ 
c_p = c_q = (2r)^{\frac{1}{r}}\, (r-1)^{\frac{r-1}{2r}}\,(r+1)^{-\frac{r+1}{2r}}, 
$$
and
$$ 
C \, = \, 2^{\frac{2}{r}}\,r\,(r+1)^{-\frac{r+1}{r}}. 
$$
It remains to raise $C$ to the power $-r' = - \frac{r}{r-1}$, and then 
we obtain an explicit expression for the optimal constant in \eqref{simple} 
derived on the basis of \eqref{eq:young}.

\begin{prop}\label{equal}
If the independent random vectors $X$ and $Y$ 
satisfy $N_r(X) = N_r(Y) = N$ for some $r \geq 1$, then
\be\label{eq:equal}
N_r(X+Y) \, \geq \, A_r N, \qquad 
A_r = 4^{-\frac{1}{r-1}}\,(r+1)^{\frac{r+1}{r-1}}\, r^{-\frac{r}{r-1}}.
\en
\end{prop}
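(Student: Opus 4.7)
The plan is to derive Proposition \ref{equal} as a direct consequence of Proposition \ref{young} by solving a one-variable optimization problem in the pair $(p,q)$ of Young exponents. First I would specialize \eqref{eq:young} to $N_r(X) = N_r(Y) = N$; since $\frac{1}{p'} + \frac{1}{q'} = \frac{1}{r'}$, the right-hand side collapses to $\frac{1}{C}\,N^{1/r'}$, giving
$$
N_r(X+Y) \, \geq \, C^{-r'}\, N
$$
for every admissible triple $(p,q,r)$. So the proposition amounts to finding the value of $(p,q)$ that minimizes $C(p,q,r)$ with $r$ held fixed, and then explicitly computing $C^{-r'}$ at that minimizer.

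Next I would reparameterize through $x = \frac{1}{p'}$, $y = \frac{1}{q'}$, noting that the constraint \eqref{conjugate} becomes the affine condition $x+y = \frac{1}{r'}$ on $[0,\frac{1}{r'}]$. Using the identity $\frac{1}{p} = 1-x$ (and analogously for $q$), the definition \eqref{constant} gives an explicit factorization of $1/C$, and taking logarithms turns the problem into maximizing
$$
\psi(x) \, = \, \log c_r - \bigl[x\log x - (1-x)\log(1-x)\bigr] - \bigl[y\log y - (1-y)\log(1-y)\bigr]
$$
on $[0, \frac{1}{r'}]$, where $y = \frac{1}{r'} - x$. The function vanishes at both endpoints, so the maximum is attained at an interior critical point.

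I would then compute $\psi'(x)$ and observe that the critical point equation $\log\frac{y(1-y)}{x(1-x)} = 0$ reduces to $y(1-y) = x(1-x)$, whose solutions are either the symmetric point $x = y$ or $x = 1-y$. The second is inadmissible because $x+y = \frac{1}{r'} < 1$, leaving the unique candidate $x = y = \frac{1}{2r'}$, i.e., $p = q = \frac{2r}{r+1}$. A short second-derivative check confirms this is a maximum. The only step that requires care is this optimization; while it is essentially one-variable calculus, one has to rule out the spurious root $x = 1-y$ and keep track of the correct ranges of the exponents.

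Finally, I would substitute $p = q = \frac{2r}{r+1}$ back into \eqref{constant} to obtain $c_p = c_q$ in closed form, combine with $c_r = r^{1/r}(r-1)^{(r-1)/r}\cdot r^{-1}$ (via $c_r = r^{1/r}/(r')^{1/r'}$), and simplify $C$ to
$$
C \, = \, 2^{2/r}\, r\, (r+1)^{-(r+1)/r}.
$$
Raising this to the power $-r' = -\frac{r}{r-1}$ then yields the claimed constant $A_r = 4^{-1/(r-1)}(r+1)^{(r+1)/(r-1)} r^{-r/(r-1)}$, completing the proof.
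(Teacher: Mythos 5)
Your proposal is correct and follows essentially the same route as the paper: specialize \eqref{eq:young} to equal entropy powers, maximize $\log(1/C)$ in the variables $x=1/p'$, $y=1/q'$ under $x+y=1/r'$, identify the optimum at $p=q=\frac{2r}{r+1}$, and raise $C=2^{2/r}r\,(r+1)^{-(r+1)/r}$ to the power $-r'$. One small slip to fix: $c_r = r^{1/r}/(r')^{1/r'} = r^{\frac{2-r}{r}}(r-1)^{\frac{r-1}{r}}$, not $r^{1/r}(r-1)^{\frac{r-1}{r}}r^{-1}$, though your final expressions for $C$ and $A_r$ are the correct ones.
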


Proposition \ref{equal} is not new and a more general version where 
the distributions have different R\'enyi entropies was obtained in 
\cite[Theorem 1]{R-S}. Moreover, in the case of different R\'enyi entropies, 
tighter bounds are provided in \cite[Corollary 3]{R-S}.

Now, it is easy to see that $1 < A_r < 2$. Moreover, \eqref{eq:equal} 
provides the desired linear bound \eqref{main} in case of equal entropy powers, 
$
N_r^\alpha(X+Y) \geq 2 N^\alpha(X),
$
as long as $A_r \geq 2^{1/\alpha}$, or equivalently, when
$\alpha > \alpha(r) = (\log 2)/(\log A_r)$.
A simple analysis shows that $\alpha(r) \leq (r+1)/2$.

\section{The general case}\label{4}

Here we derive the extension \eqref{main} of the EPI for the power 
$\alpha = \frac{r+1}{2}$ in the case of arbitrary values $N_r(X)$ and $N_r(Y)$. 
As a preliminary step, let us return to the inequality \eqref{eq:young} 
and raise it to the power $\alpha$, so as to rewrite it as
$$
N_r^\alpha(X+Y) \geq 
C^{-\alpha r'}\,N_r(X)^{\frac{\alpha r'}{p'}} N_r(Y)^{\frac{\alpha r'}{q'}}.
$$
Putting $x = N_r^\alpha(X)$, $y = N_r^\alpha(Y)$ and assuming without loss 
of generality that $x+y=\frac{1}{r'}$ (using homogeneity of these functionals), 
it is enough to show that
$$
C^{-\alpha r'}\,x^{\frac{r'}{p'}} y^{\frac{r'}{q'}} \geq \frac{1}{r'}
$$
for some admissible $p,q \geq 1$, i.e., satisfying the condition
$\frac{1}{p'} + \frac{1}{q'} = \frac{1}{r'}$. Hence, Theorem \ref{main-thm} 
will immediately follow from the following lemma.

\begin{lemma}\label{cle}
Let $r>1$. Let $x,y > 0$ be such that $x+y = \frac{1}{r'}$. Then, there exist 
$p,q \geq 1$ satisfying $\frac{1}{p'} + \frac{1}{q'} = \frac{1}{r'}$ such that
$$
C^{-\alpha r'}\,x^{\frac{r'}{p'}} y^{\frac{r'}{q'}} \geq \frac{1}{r'},
$$
where $C=C(p,q,r)$ is as in \eqref{constant}, and $\alpha = \frac{r+1}{2}$.
\end{lemma}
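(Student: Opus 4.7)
The plan is to reduce Lemma~\ref{cle} to a one-variable inequality and then verify it by direct calculus. By the homogeneity of both sides of the target inequality in $(x,y)$, one may assume $x+y = 1/r'$ and set $\beta = r'x \in (0,1)$, so that $1-\beta = r'y$. Admissible pairs $(p,q)$ can then be parametrized by a single $\lambda \in [0,1]$ via $1/p' = \lambda/r'$ and $1/q' = (1-\lambda)/r'$, which automatically satisfies the conjugacy condition. Since then $r'/p' + r'/q' = 1$, substituting $x = \beta/r'$, $y = (1-\beta)/r'$ and cancelling the common factor $1/r'$ turns the inequality of the lemma into
\[
C(\lambda)^{-\alpha r'}\,\beta^{\lambda}(1-\beta)^{1-\lambda}\;\geq\;1,
\]
with $C(\lambda) = c_p c_q/c_r$ given by \eqref{constant}.

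A natural candidate is $\lambda = \beta$, which at $\beta = 1/2$ returns the symmetric optimum $\lambda = 1/2$ of Section~\ref{3} and makes the problem symmetric in $\beta \leftrightarrow 1-\beta$. Taking logarithms, it remains to prove
\[
\Phi_r(\beta)\;:=\;\alpha r'\log\frac{1}{C(\beta)}\;-\;h(\beta)\;\geq\;0\quad\text{for all }\beta \in (0,1),
\]
where $h(\beta) = -\beta\log\beta - (1-\beta)\log(1-\beta)$ is the binary entropy and $\alpha = (r+1)/2$.

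I would verify $\Phi_r \geq 0$ on $(0, 1/2]$ by combining boundary and interior information. At $\beta = 1/2$, the relation $A_r = C(1/2)^{-r'}$ from Section~\ref{3} shows that $\Phi_r(1/2) \geq 0$ is equivalent to $\alpha \geq \log 2/\log A_r$, and the bound $\alpha(r) \leq (r+1)/2$ noted after Proposition~\ref{equal} makes this automatic. At the endpoint, a short expansion yields $\log(1/C(\beta)) \sim -(\beta/r')\log\beta$ as $\beta \to 0^+$, so $\Phi_r(\beta) \sim -(\alpha - 1)\beta\log\beta \geq 0$ since $\alpha > 1$. Interior nonnegativity on $(0, 1/2)$ should then follow from a unimodality or concavity property of $\Phi_r$, checked via the sign of $\Phi_r''$.

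The main obstacle is this interior analysis, which is where the precise exponent $\alpha = (r+1)/2$ is expected to enter essentially: for smaller $\alpha$ the function $\Phi_r$ should become negative at some intermediate $\beta$. If the simple choice $\lambda = \beta$ does not yield a tractable second derivative uniformly in $r > 1$, one can instead take the Gaussian-extremal choice $\lambda/(1-\lambda) = (\beta/(1-\beta))^{1/\alpha}$, which makes Young's inequality sharp for Gaussians whose $r$-R\'enyi entropy powers match $x$ and $y$. In either case, what remains is a delicate but essentially routine one-variable calculus verification.
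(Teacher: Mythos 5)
Your reduction coincides with the paper's own proof: taking $1/p'=x$, $1/q'=y$ (your choice $\lambda=\beta$) and passing to logarithms turns Lemma~\ref{cle} into exactly the one-variable inequality the paper writes as \eqref{4.5}; indeed your $\Phi_r(\beta)$ equals, up to the positive factor $(\alpha-1)/c$ and an additive constant depending only on $c=x+y$, the function $v(x)=\log\psi(x)$ of Lemma~\ref{key}. The problem is that you stop exactly where the real work begins. Verifying $\Phi_r(\beta)\to 0^+$ as $\beta\to 0$ and $\Phi_r(1/2)\ge 0$ does not give nonnegativity on $(0,1/2)$; for that you appeal to ``a unimodality or concavity property of $\Phi_r$, checked via the sign of $\Phi_r''$'', which you neither formulate precisely nor prove, and you hedge further by offering a different choice of $(p,q)$ in case it fails. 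That interior analysis is precisely the content of the paper's Lemma~\ref{key}: one computes $v''(x)=\frac{\beta(2-c)}{(1-x)(1-y)}-\frac{c}{xy}$, studies the sign of the quadratic $w(x)=\beta(2-c)xy-c(1-x)(1-y)$, and treats the borderline exponent $\beta=\frac{2}{c}-1$ (i.e.\ $\alpha=\frac{r+1}{2}$) with care, since for larger $\beta$ the function is genuinely not concave (it can be concave--convex--concave with an interior local minimum). Calling this step ``essentially routine'' without doing it is a genuine gap at the heart of the lemma — it is also exactly where the exponent $(r+1)/2$ enters, as you yourself anticipate. (Your midpoint check additionally leans on the unproved assertion $\alpha(r)\le\frac{r+1}{2}$ from Section~\ref{3}.)

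For what it is worth, the concavity you hope for is in fact true at the critical exponent, and can be established quickly: with $\beta=\frac{2-c}{c}$ the quadratic $w$ is symmetric under $x\leftrightarrow y$ (hence about $x=c/2$), has negative leading coefficient $c-\beta(2-c)=\frac{c^2-(2-c)^2}{c}=\frac{4(c-1)}{c}<0$, and vanishes at $x=c/2$; therefore $c/2$ is its vertex and maximum, so $w\le 0$ on $(0,c)$, hence $v''\le 0$ and $v$ (equivalently $\Phi_r$) is concave. A concave function on $[0,c]$ is bounded below by the minimum of its endpoint values, and at $x=0$, $x=c$ there is equality in \eqref{4.5}, which finishes the proof — and also makes your separate midpoint check via $A_r$ unnecessary. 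Supplying this computation (or the paper's case analysis in Lemma~\ref{key}) is what your proposal is missing; as written it sets up the right inequality but does not prove it.
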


To prove Lemma \ref{cle}, we make use of the following calculus lemma.

\begin{lemma}\label{key}
Given $0 < c < 1$ and $\beta \geq \frac{2}{c} - 1$, the function
$$
\psi(x) = \frac{(1 - x)^{\beta(1 - x)}\, (1 - y)^{\beta(1 - y)}}{x^x y^y} 
\quad (y=c-x)
$$
attains minimum on the interval $0 \leq x \leq c$
either at the endpoints $x=0$, $x=c$, or at the center 
$x = \frac{c}{2}$. Moreover, in case $\beta = \frac{2}{c} - 1$, 
this function attains minimum at the endpoints.
\end{lemma}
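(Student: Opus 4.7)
My plan is to analyze the critical points of $\psi$ on $[0,c]$ through the sign of $F(x) := \frac{d}{dx}\log \psi(x)$. A direct computation (using $y = c-x$, so $dy/dx = -1$) gives
$$
F(x) = \beta \log \frac{1-y}{1-x} + \log \frac{y}{x},
$$
with boundary behavior $F(0^+) = +\infty$, $F(c^-) = -\infty$, and $F(c/2) = 0$ by the symmetry $x \leftrightarrow y$. Differentiating once more and placing everything over a common denominator yields
$$
F'(x) = \frac{xy\,\big[(\beta+1)(2-c) - 2\big] - c(1-c)}{x(1-x)\,y(1-y)}.
$$
The denominator is strictly positive on $(0,c)$, and the hypothesis $\beta \geq \frac{2}{c}-1$ makes the bracket nonnegative, so the sign of $F'$ is controlled by the single scalar $xy - K$ with $K := c(1-c)/[(\beta+1)(2-c)-2]$. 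Since $xy = x(c-x)$ is a strictly concave function on $[0,c]$ with maximum $c^2/4$ at $x = c/2$, the zero set $\{F'=0\}\cap(0,c)$ is either empty, $\{c/2\}$, or a symmetric pair $\{x_1,\,c-x_1\}$.

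The borderline case $\beta = \frac{2}{c}-1$ is handled by substituting this value into the numerator of $F'$ and using the identity $4xy - c^2 = -(x-y)^2$. This produces
$$
F'(x) = -\frac{(1-c)(x-y)^2}{c\,x(1-x)\,y(1-y)} \leq 0,
$$
with equality only at $x = c/2$. Hence $F$ is strictly decreasing on $(0,c)$ and vanishes only at $c/2$, so $\log\psi$ increases on $(0,c/2)$ and decreases on $(c/2,c)$. Thus $\psi$ has a maximum at $c/2$ and attains its minimum on $[0,c]$ at the endpoints, as required for the second part of the statement.

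For $\beta > \frac{2}{c}-1$, a short check shows $K < c^2/4$, so $F'$ vanishes at exactly two symmetric points $x_1 < c/2 < x_2 = c-x_1$, with $F' < 0$ on $(0,x_1)\cup(x_2,c)$ and $F' > 0$ on $(x_1,x_2)$. Because $F(c/2) = 0$ lies in the interval where $F$ is increasing, one has $F(x_1) < 0 < F(x_2)$; combined with the endpoint blowups, this forces $F$ to have exactly three zeros $z_1 < c/2 < z_2$ with the sign pattern $+,-,+,-$ around them. Consequently $z_1$ and $z_2$ are local maxima of $\log\psi$ while $c/2$ is the only interior local minimum, so the minimum of $\psi$ on $[0,c]$ must be attained either at the endpoints or at $c/2$.

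The main obstacle is computational: extracting the tidy expression for $F'$ and, above all, spotting the perfect-square factorization at the borderline value $\beta = \frac{2}{c}-1$. Once these are in place, the remainder is elementary sign bookkeeping and critical-point counting.
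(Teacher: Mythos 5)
Your proof is correct and takes essentially the same route as the paper: both analyze $F=(\log\psi)'$ and its second derivative, whose numerator is the same quadratic $\beta(2-c)\,xy - c(1-x)(1-y)$, and both exploit the symmetry point $x=c/2$ as the only interior candidate for a local minimum. Your explicit threshold $K$ (with $K<c^2/4$ iff $\beta>\frac{2}{c}-1$) and the perfect-square factorization $4xy-c^2=-(x-y)^2$ at $\beta=\frac{2}{c}-1$, which shows $\log\psi$ is concave in the borderline case, are a somewhat more explicit rendering of the paper's case analysis on the roots of that quadratic, but the substance is the same.
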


\begin{proof}
Inside the interval $(0,c)$ the function 
\bee
v(x) 
 & = &
\log\, \psi(x) \\
 & = &
\beta ((1-x)\log(1-x) + (1-y)\log(1-y)) \\ & ~ & - (x\log x + y\log y)
\ene
has the first two derivatives
$$
v'(x) = -\beta\,\big(\log(1-x) - \log(1-y)\big) - (\log x - \log y),
$$
\bee
v''(x) 
 & = &
\Big(\frac{\beta}{1-x} + \frac{\beta}{1-y}\Big) - 
\Big(\frac{1}{x} + \frac{1}{y}\Big) \\
 & = &
\frac{\beta (2-c)}{(1-x)(1-y)} - \frac{c}{xy}.
\ene
Note that $v(0) = v(c)$. Also, $v'(0+) = \infty$, $v'(c-) = -\infty$, 
so $v$ is increasing near zero and is decreasing near the point $c$. 
In addition, $v''(x)$ is vanishing, if and only if
\be\label{4.4}
w(x) \equiv \beta (2-c)\, xy - c\,(1-x)(1-y) = 0,
\en 
which is a quadratic equation (recall that $y=c-x$).
In general it has at most two roots.

\vskip2mm
Case 1: Equation \eqref{4.4} has at most one root in $(0,c)$. Since $w(0) < 0$, 
it means that $w(x) \leq 0$ in $(0,c)$. Therefore, $v$ is concave, and thus 
attains its minimum at the endpoints of this interval.

\vskip2mm
Case 2: Equation \eqref{4.4} has exactly two roots in $(0,c)$, say 
$0 < x_1 < x_2 < c$. Since $w(0) < 0$ and $w(c) < 0$, it means that
$w(x)< 0$ in $(0,x_1)$ and $(x_2,c)$, while $w(x) > 0$ in $(x_1,x_2)$. 
That is, $v$ is strictly concave on $(0,x_1)$ and $(x_2,c)$, and is 
strictly convex on the intermediate interval. Hence, in this case there is 
at most one point of local minimum. If there is no point of local minimum, 
then $v$ attains its minimum at the endpoints. It there is 
one point $x_0$ of local minimum of $v$, then it must belong to $(x_1,x_2)$, 
and there are two points of local maximum, say $z_1$ and $z_2$ belonging 
to the other subintervals. In particular, $v' < 0$ on $(z_1,x_0)$ and $v'>0$ 
on $(x_0,z_2)$.

Note that $v'(c/2) = 0$, so this point is a candidate for local extremum. 
Moreover, by the assumption on $\beta$,
$$
v''(c/2) = \frac{4\,(\beta c - (2-c))}{c(2-c)} \geq 0.
$$
If $\beta > \frac{2}{c} - 1$, then $v''(c/2) > 0$ which means that $x_0 = c/2$ 
is a local minimum for $v$ and therefore for $\psi$, and the first assertion 
follows. If $\beta = \frac{2}{c} - 1$, then $v''(c/2) = 0$ which means that 
either $c/2 = x_1$ or $c/2 = x_2$. But at these points the derivative of $v$ 
may not vanish. In other words, the equality
$\beta = \frac{2}{c} - 1$ is only possible under Case 1. 
\end{proof}

\begin{proof}[Proof of Lemma \ref{cle}]
The best values of $p$ and $q$ can be described implicitly as solutions 
to a certain equation, and we prefer to take some specific values. 
As a natural choice, consider $(p,q)$ such that $\frac{1}{p'} = x$ and 
$\frac{1}{q'} = y$ and try to check the desired inequality 
$x^{xr'} y^{yr'} \geq C^{\alpha r'}\frac{1}{r'}$, i.e.,
$$
x^{x} y^{y} \geq C^{\alpha}\frac{1}{(r')^{1/r'}} \qquad 
\Big(x,y > 0, \ x+y=\frac{1}{r'}\Big).
$$
Equivalently, so that to eliminate the parameter $r$, we need to check 
whether or not
\be\label{no-r}
x^{x} y^{y} \geq C^{\alpha} (x+y)^{x+y} \qquad (x,y > 0, \ x+y < 1).
\en
As in Section \ref{3}, cf. \eqref{C},
\bee
C 
 & = &
\frac{c_p c_q}{c_r} \\
 & = &
\frac{x^x}{(1 - x)^{1 - x}} \ \frac{y^y}{(1 - y)^{1 - y}} \ 
\frac{(1 - x - y)^{1 - x - y}}{(x+y)^{x+y}},
\ene
and \eqref{no-r} takes the form
\be\label{4.2}
\bigg(\frac{(x+y)^{x+y}}{x^{x} y^{y}}\bigg)^{\alpha - 1} \, \geq \, 
\bigg(\frac{(1-x-y)^{1-x-y}}{(1-x)^{1-x}\, (1-y)^{1-y}}\bigg)^\alpha,
\en
or equivalently
\be\label{4.5}
\frac{(1 - x)^{\beta(1 - x)}\, (1 - y)^{\beta(1 - y)}}{x^x y^y} \, \geq \, 
\frac{(1 - x - y)^{\beta(1 - x - y)}}{(x+y)^{x+y}},
\en
where
$$ \beta = \frac{\alpha}{\alpha - 1}. $$
Here the right-hand side depends only on $c = x+y$ (since $\alpha$ may only 
depend on $r$ which is a function of $x+y$). Hence, to prove \eqref{4.5}, it is 
sufficient to minimize the left-hand side under the constraint $x,y \geq 0$, 
$x+y=c$, and then to compare the minimum with the right-hand side.
In case $\alpha = \frac{r+1}{2}$, we have
$$
\frac{\alpha}{\alpha - 1} = \frac{2 - x - y}{x+y} = \frac{2-c}{c} = 
\frac{2}{c} - 1,
$$
which is exactly the extreme value for $\beta$ in Lemma \ref{key}. Therefore, by 
its conclusion, the left-hand side of \eqref{4.5} is minimized either at $x=0$ 
or $x=c$. But for such boundary values there is equality in \eqref{4.5}. 
As a result, we obtain the desired inequality \eqref{no-r} for all $x,y > 0$ 
such that $x+y < 1$.
\end{proof}

\section{R\'enyi entropy powers for the heat semi-group}\label{5}

Let us now look at the possible behavior of the R\'enyi entropy powers 
in the class of densities $f_t$ of $X_t = X + \sqrt{t} Z$, assuming that 
$X$ has a sufficiently regular positive density $f$ (on the line), and $Z$ 
is a standard normal random variable independent of $X$.
Since for small $t>0$
$$
f_t(x) = f(x) + \frac{1}{2}\,f''(x)\,t + \mbox{o}(t),
$$
we find, by Taylor expansion and integrating by parts,
\bee
\int_{-\infty}^\infty f_t(x)^r\,dx 
 & = &
\int_{-\infty}^\infty f(x)^r \, dx \\ 
 & & \hskip-15mm - \ 
\frac{t}{2}\,r(r-1) 
\int_{-\infty}^\infty f(x)^{r-2} f'(x)^2 \, dx + \mbox{o}(t),
\ene
and thus, for $r > 1$,
\bee
N_r(X_t) 
 & = & 
N_r(X) \\ 
 & & \hskip-15mm + \ tr 
\bigg(\int_{-\infty}^\infty f(x)^r \,dx\bigg)^{\frac{1+r}{1-r}}
\int_{-\infty}^\infty f(x)^{r-2} f'(x)^2 \, dx \\ 
 & & \hskip-15mm + \ 
\mbox{o}(t).
\ene

Using this representation, we are going to test the inequality \eqref{main} 
for $\alpha=1$, when it becomes
$$ 
N_r(X_t) \geq N_r(X) + t N_r(Z). 
$$
Comparing the linear terms in front of $t$ and using 
$N_r(Z) = 2 \pi r^{\frac{1}{r-1}}$, we would be led to a Nash-type inequality 
\be\label{5.1}
r \bigg(\int_{-\infty}^\infty f(x)^r \, dx \bigg)^{\frac{1+r}{1-r}} 
\int_{-\infty}^\infty f(x)^{r-2} f'(x)^2 \, dx 
 \, \geq \, 2\pi\, r^{\frac{1}{r-1}},
\en
holding already without too restrictive conditions 
(e.g., for all $C^1$-smooth $f>0$).

Now, let us take $f(x) = B e^{-\frac{|x|^p}{p}}$ with $p \geq 2$, where $B$
is a normalizing constant, i.e.,
$B^{-1} = 2\, p^{\frac{1}{p}-1}\, \Gamma(\frac{1}{p})$. In this case,
$$ 
\int_{-\infty}^\infty f(x)^r \, dx = 
B^r \int_{-\infty}^\infty e^{-\frac{r |x|^p}{p}} \, dx = 
B^{r-1} \frac{1}{r^{1/p}}, 
$$
so that
$$ 
\bigg(\int_{-\infty}^\infty f(x)^r \, dx \bigg)^{\frac{1+r}{1-r}} = \,
B^{-(r+1)} \, r^{\frac{r+1}{p(r-1)}}. 
$$
Similarly,
\begin{eqnarray*}
\int_{-\infty}^\infty f'(x)^2 f(x)^{r-2} \, dx 
 & = & 
B^r \int_{-\infty}^\infty |x|^{2(p-1)} e^{-\frac{r|x|^p}{p}} \, dx \\ 
 & = & 
2 B^r \, \Big(\frac{p}{r}\Big)^{\frac{2p-1}{p}}\, 
\frac{1}{p} \Gamma\Big(2 - \frac{1}{p}\Big),
\end{eqnarray*}
and thus the left-hand side in \eqref{5.1} is equal to
$$
\frac{2r}{B}\, r^{\frac{r+1}{p(r-1)}}
\Big(\frac{p}{r}\Big)^{\frac{2p-1}{p}} \frac{1}{p}\, 
\Gamma\Big(2 - \frac{1}{p}\Big) 
 =
4\, \Gamma\Big(\frac{1}{p}\Big)\, \Gamma\Big(2 - \frac{1}{p}\Big) \,
r^{\frac{2r}{p(r-1)} - 1}.
$$
Hence, inequality \eqref{5.1} says that
\be\label{5.2}
2 \pi \, \leq \,
4\, r^{-\frac{r(p-2)}{p(r-1)}} \, 
\Gamma\Big(\frac{1}{p}\Big)\, \Gamma\Big(2 - \frac{1}{p}\Big). 
\en

We claim that it cannot be true for all $p>2$ sufficiently
close to $2$ (i.e., when $X$ itself is almost standard normal). 
To see this, denote by $G(p)$ the right-hand side of \eqref{5.2} and note that
there is equality at $p=2$. So, let us look at the derivative and show that
$G'(2) < 0$, i.e., $H'(1/2) > 0$ for $H(x) = \log G(1/x)$. Indeed,
$$
H'(x) = \frac{2r}{r-1}\,\log r + \frac{\Gamma'(x)}{\Gamma(x)} - 
\frac{\Gamma'(2-x)}{\Gamma(2-x)}.
$$
From the fundamental relation $\Gamma(x+1) = x\Gamma(x)$, it follows that
$\Gamma'(x+1) = \Gamma(x) + x\Gamma'(x)$, so 
$\Gamma'(3/2) = \Gamma(1/2) + \frac{1}{2}\,\Gamma'(1/2)$, while
$\Gamma(3/2) = \frac{1}{2}\,\Gamma(1/2)$. Hence,
$$
H'(1/2) = \frac{2r}{r-1}\,\log r - 2 > 0.
$$
We may conclude that the entropy power inequality for $N_r$ of any order $r>1$
does not hold in general, even when one of the variable is Gaussian.

For another, less direct argument, one may return to \eqref{5.1} and rewrite
it as a homogeneous inequality
\begin{eqnarray*}
r \bigg(\int_\R f(x)^r \, dx \bigg)^{\frac{1+r}{1-r}} 
\int_\R f(x)^{r-2} f'(x)^2 \, dx 
 \\ \geq  2\pi\, r^{\frac{1}{r-1}}
\bigg(\int_\R f(x)\,dx\bigg)^{\frac{2r}{1-r}}.
\end{eqnarray*}
After the change $f=u^{\frac{2}{r}}$, it takes the form of the Nash-type
inequality
\be\label{5.3}
\bigg(\int_\R u(x)^2 \, dx\bigg)^{\frac{r+1}{r-1}} \, \leq \,
K_r \int_\R u'(x)^2 \, dx\, 
\bigg( \int_\R u(x)^{\frac{2}{r}}\,dx\bigg)^{\frac{2r}{r-1}}
\en
with $K_r = \frac{2}{\pi r^{\frac{r}{r-1}}}$. 
In fact, the Nash inequality in $\R^n$ asserts that
$$ 
\bigg(\int_{\R^n} \!u(x)^2 dx \bigg)^{1+\frac{2}{n}} \leq
C_n \int_{\R^n} |\nabla u(x)|^2 dx\, 
\bigg(\int_{\R^n} \!u(x) dx\bigg)^{\frac{4}{n}} 
$$
with sharp constant given by
$$ 
C_n = \Big(1 + \frac{2}{n}\Big)\, 
\Gamma \left( \frac{n}{2} + 2 \right)^{\frac{2}{n}} \frac{1}{\pi j_{n/2}^2}
$$
(cf. \cite{C-L}, \cite{B2}). Here $j_{\frac{n}{2}}$ denotes the smallest 
positive zero of the Bessel function $J_{\frac{n}{2}}$ of order $\frac{n}{2}$. 
In dimension $n=1$, one has 
$J_{\frac{1}{2}}(x) = \sqrt{\frac{2}{\pi x}}\,\sin(x)$ 
(cf. \cite{W}, p.\,54, eq. (3)), thus $j_{\frac{1}{2}} = \pi$. 
Hence the sharp Nash inequality in dimension 1 reads
$$ 
\bigg( \int u(x)^2 \, dx \bigg)^{3} \, \leq \frac{27}{16\, \pi^2} \,
 \int u'(x)^2 \, dx\, \bigg(\int u(x) \, dx\bigg)^{4},
$$
which is the same as \eqref{5.3} for $r=2$, however, with a larger constant.
Hence, as we have already seen, inequality \eqref{main} cannot be true
for $\alpha=1$ and $r=2$. Let us notice that the Nash inequality with 
the asymptotically sharp constant $2/(\pi e n)$ can be deduced from 
the classical EPI \eqref{EPI} (cf. \cite{T}).

For the parameter $r=2$, routine computations also provide a counterexample 
in the case where both $X$ and $Y$ have the beta distribution with density 
$f(x) = \frac{3}{4}\,(1-x^2)$, $|x|<1$ (sometimes called 
a $q$-Gaussian distribution).

\section{Lower bound on the optimal exponent}\label{6}

One may also provide a simple lower bound on the optimal exponent 
$\alpha = \alpha_{opt}$ that satisfies the inequality
$$
N_r(X+Y)^\alpha \geq N_r(X)^\alpha + N_r(Y)^\alpha
$$
for all independent random vectors $X$ and $Y$. Together with the upper bound 
of Theorem \ref{main-thm} and the counterexample in Section \ref{5}, we have:

\begin{prop}
One has
$$ 
\alpha_{opt} \in 
\Big[\min \Big\{1,\frac{\log 2}{2}\,\frac{r-1}{\log(\frac{r+1}{2})}\Big\},
\frac{r+1}{2}\,\Big]. 
$$
\end{prop}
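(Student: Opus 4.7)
The upper bound $\alpha_{opt} \le \frac{r+1}{2}$ is immediate from Theorem \ref{main-thm}, which certifies \eqref{main} at the exponent $\frac{r+1}{2}$. For the lower bound my plan is to specialize the R\'enyi EPI to a single explicit pair of independent summands on which everything can be evaluated in closed form, and simply read off the constraint that $\alpha$ must satisfy.

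The natural candidate is a pair of i.i.d.\ uniforms on $[0,1]$. Writing $f = \mathbf{1}_{[0,1]}$ gives $\int f^r = 1$ and hence $N_r(X) = N_r(Y) = 1$, while the density of $X+Y$ is the triangular function $g(t) = t\,\mathbf{1}_{[0,1]}(t) + (2-t)\,\mathbf{1}_{[1,2]}(t)$. The first computation I would perform is
$$
\int_{\R} g(t)^r\,dt \,=\, 2\int_0^1 t^r\,dt \,=\, \frac{2}{r+1},
$$
so that $N_r(X+Y) = \bigl(\tfrac{r+1}{2}\bigr)^{2/(r-1)}$. Plugging these three values into \eqref{main} with a free exponent $\alpha$ reduces that inequality to $\bigl(\tfrac{r+1}{2}\bigr)^{2\alpha/(r-1)} \ge 2$, which rearranges to
$$
\alpha \,\ge\, \frac{\log 2}{2}\cdot\frac{r-1}{\log((r+1)/2)}.
$$
Hence any $\alpha$ strictly below this threshold produces a genuine counterexample on this uniform pair, so the threshold itself is a valid lower bound on $\alpha_{opt}$.

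A brief check then closes the proposition: the function on the right-hand side is increasing in $r$, tends to $\log 2$ as $r\downarrow 1$, equals $1$ at $r=3$, and diverges as $r\to\infty$. In particular it dominates $\min\bigl\{1,\tfrac{\log 2}{2}\tfrac{r-1}{\log((r+1)/2)}\bigr\}$, so the announced lower bound follows at once (and is in fact slightly weaker than the uniform estimate in the regime $r>3$, where the $\min$ collapses to the cleaner value~$1$). No serious obstacle is present: the whole argument reduces to the two elementary integrals above, the algebraic rearrangement, and the monotonicity check, so the work is essentially routine calculus.
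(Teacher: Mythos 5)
Your proposal is correct and follows essentially the same route as the paper: the upper bound is quoted from Theorem \ref{main-thm}, and the lower bound comes from the same i.i.d.\ uniform pair on $[0,1]$, with the identical computation $\int (f*g)^r = \tfrac{2}{r+1}$ giving $N_r(X+Y) = (\tfrac{r+1}{2})^{2/(r-1)}$ and hence $\alpha_{opt} \ge \tfrac{\log 2}{2}\,\tfrac{r-1}{\log((r+1)/2)}$. Your closing monotonicity check is harmless but unnecessary, since the stated lower bound is a minimum and is therefore trivially dominated by the uniform-pair estimate.
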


\begin{proof}
For the remaining lower bound, let $X$ and $Y$ be independent and uniformly 
distributed on $[0,1]$, in which case $N_r(X) = N_r(Y) = 1$. The sum $X+Y$ 
has the triangle density $(f * g)(x) = x$ on $[0,1]$ and $(f * g)(x) = 2-x$ 
on $[1,2]$. Hence,
$$ 
\int (f * g)(x)^r \, dx = 
\int_0^1 x^r \, dx + \int_1^2 (2-x)^r \, dx = \frac{2}{r+1}. 
$$
Thus
$$ 
N_r(X+Y) = \left( \frac{r+1}{2} \right)^{\frac{2}{r-1}}. 
$$
Since 
$N_r(X+Y)^{\alpha_{opt}} \geq N_r(X)^{\alpha_{opt}} + N_r(Y)^{\alpha_{opt}}$, 
we deduce that 
$\left(\frac{r+1}{2} \right)^{\frac{2 \alpha_{opt}}{r-1}} \geq 2$,
which is the required statement.
\end{proof}

Let us stress that, if $X$ and $Y$ are independent real valued random variables 
with $N_r^{\alpha}(X + Y) = N_r^{\alpha}(X) + N_r^{\alpha}(Y)$, then drawing 
vectors ${\bf X} = (X_1, \dots, X_n)$  and ${\bf Y} = (Y_1, \dots, Y_n)$
with i.i.d. $X_i \sim X$ and $Y_i \sim Y$, we have
$$ 
N_r^{\alpha}({\bf X} + {\bf Y}) = N_r^{\alpha}({\bf X}) + N_r^{\alpha}({\bf Y}). 
$$
Hence, via this tensorization argument, there is no hope to improve $\alpha$ 
in higher dimension.

\section{Monotonicity and the CLT}\label{7}

Since the entropy power inequality (1) is closely related to the monotonicity
of the entropy along rescaled convolutions, let us make a remark,
restricting ourselves to the dimension $n=1$. Given an i.i.d. sequence 
of random variables $X,X_1,X_2,\dots$ with mean zero and variance one, 
the entropies $h(Z_k)$ of the normalized sums
$$
Z_k = \frac{X_1 + \dots + X_k}{\sqrt{k}}
$$
are known to be non-decreasing for growing $k$ and approaching the entropy $h(Z)$
of a standard normal random variable $Z$, cf. 
\cite{A-B-B-N}, \cite{Ba}, \cite{M-B}.
Since the monotonicity follows from (1), although for the subsequence 
$k = 2^l$ only, and since we have the more general inequality (3),
one may naturally wonder whether such a property extends to the R\'enyi's
entropies. This turns out to be false in general. If the 6-th moment $\E X^6$ 
is finite and $h_r(Z_{k_0})$ is finite for some $k_0$, a careful application 
of Edgeworth expansions yields an asymptotic representation
$$
\Delta_k(r) = h_r(Z) - h_r(Z_k) = B_r k^{-1} + C_r k^{-2} + o(k^{-2})
$$
with constant
$$
B_r = \frac{1}{4r}\, \bigg[\,
\frac{2-r}{3}\,\gamma_3^2 + \frac{r-1}{2}\,\gamma_4\bigg],
$$
where $\gamma_3 = \E X^3$ and $\gamma_4 = \E X^4 -3$ 
(the 3-rd and 4-th cumulants of $X$), and some constant $C_r \in \R$ 
(involving the cumulants of $X$ up to order 6). In the limit case $r=1$, 
such a representation, quantifying the entropic central limit theorem, was 
derived in \cite{B-C-G}. As for the values $r>1$, first suppose that 
$\gamma_3 \neq 0$. When $r$ is sufficiently close to 1, then $B_r>0$, 
so that $\Delta_k(r)$ is an eventually decreasing sequence like for $r=1$. 
More precisely, this is true for all $r>1$, whenever 
$\gamma_4 \geq \frac{2}{3}\,\gamma_3^2$. But, if 
$\gamma_4 < \frac{2}{3}\,\gamma_3^2$, then $B_r < 0$ for all 
$r > r_0 = (4\gamma_3^2 - 3\gamma_4)/(2\gamma_3^2 - 3\gamma_4)$, hence  
$\Delta_k(r)$ becomes an eventually increasing sequence. In that case, 
necessarily
$$
h_r(Z_k) > h_r(Z) \quad {\rm for \ all} \ k \ {\rm large \ enough},
$$
which is impossible in the Shannon case $r=1$. This also shows that
$\Delta_k(r)$ may not serve as distance.

If $\gamma_3 = 0$ (as in the situation of symmetric distributions), 
the constant is simplified to
$$
B_r = \frac{r-1}{8r}\, \gamma_4.
$$
Both cases, $\gamma_4>0$ or $\gamma_4<0$, are possible, and one
can make a similar conclusion as before for the whole range $r>1$.
We refer an interested reader to \cite{B-M} for more details.

\section*{Acknowledgment}
The authors would like to thank Eric Carlen, Eshed Ram and Igal Sason 
for reading the manuscript and for their valuable comments.
They are also grateful to both referees. In particular, one of them emphasized a
dimension-free character of the optimal value of $\alpha$, and the other one
raised the problem
of the monotonicity of the R\'enyi entropy in the central limit theorem.


\begin{thebibliography}{1}


\bibitem{A-B-B-N}
S.~Artstein, K.~M. Ball, F.~Barthe, and A.~Naor. 
        Solution of {S}hannon's problem on the monotonicity of entropy.
        J. Amer. Math. Soc., 17(4):975--982 (electronic), 2004.

\bibitem{Ba} 
A. R. Barron. Entropy and the central limit theorem. Ann. Probab. 14 (1986), 
        no. 1, 336--342. 

\bibitem{B1}
W. Beckner. Inequalities in Fourier analysis. Ann. of Math. (2) 102 
        (1975), no. 1, 159--182.
        
\bibitem{B2}
W. Beckner. Geometric proof of Nash's inequality. Internat. Math. Res. Notices 
        2 (1998), 67--72.

\bibitem{B-C2}
S. G. Bobkov, and G. P. Chistyakov. Bounds for the maximum of the density of 
        the sum of independent random variables. Zap. Nauchn. Sem. S.-Peterburg. 
				Otdel. Mat. Inst. Steklov. (POMI), 408 (Veroyatnost i Statistika. 
				18):62--73, 324, 2012.

\bibitem{B-C}
S. G. Bobkov, and G. P. Chistyakov. Entropy power inequality for the R\'enyi entropy. 
        IEEE Trans. Inform. Theory  61  (2015),  no. 2, 708–-714.

\bibitem{B-C-G}
S. G. Bobkov, G. P. Chistyakov, and F. G\"otze. Rate of convergence and 
        Edgeworth-type expansion in the entropic central limit theorem. 
				Ann. Probab. 41 (2013), no. 4, 2479--2512. 
				
\bibitem{B-M}
S. G. Bobkov, and A. Marsiglietti. Convergence of the R\'enyi entropy in the 
        central limit theorem. Preprint (2017).
				
\bibitem{B-L}
H. J. Brascamp, and E. H. Lieb. Best constants in Young's inequality, 
        its converse, and its generalization to more than three 
        functions. Advances in Math. 20 (1976), no. 2, 151--173. 
				
\bibitem{C}				
M. Costa. A new entropy power inequality. IEEE Trans. Inf. Theory
        31 (1985), no. 6, 751--760.
    
\bibitem{Co}		    
T. A. Courtade. Strengthening the entropy power inequality. 
        Preprint, arXiv:1602.03033.
        
\bibitem{C-L}
E. A. Carlen, and M. Loss. Sharp constant in Nash's inequality. Internat. 
        Math. Res. Notices 7 (1993), 213--215.

\bibitem{C-T}
T. M. Cover, and J. A. Thomas. Elements of information theory. 
        Wiley Series in Telecommunications. A Wiley-Interscience 
        Publication. John Wiley \& Sons, Inc., New York, 1991. xxiv+542 pp. 

\bibitem{D}
A. Dembo. A simple proof of the concavity of the entropy power with 
        respect to the variance of additive normal noise. 
				IEEE Trans. Inform. Theory 35 (1989), 887--888.

\bibitem{D-C-T}
A. Dembo, T. M. Cover, and J. A. Thomas. Information-theoretic
        inequalities. IEEE Trans. Inform. Theory, 37(6), 1501--1518, 1991.
        
\bibitem{L}
E. H. Lieb. Proof of an entropy conjecture of Wehrl. Comm. Math. Phys. 
        62 (1978), no. 1, 35--41.
        
\bibitem{M-B}
M.~Madiman, and A. R. Barron. Generalized entropy power inequalities and 
        monotonicity properties of information.
        IEEE Trans. Inform. Theory 53 (2007), no. 7, 2317--2329.
        
\bibitem{M-M-X}
M.~Madiman, J. Melbourne, and P. Xu. Forward and Reverse Entropy Power 
        Inequalities in Convex Geometry. Preprint, arXiv:1604.04225.
        
\bibitem{M-M-X2}
M. Madiman, J. Melbourne, and P. Xu. Rogozin's convolution inequality 
        for locally compact groups. Preprint, arXiv:1705.00642.
        
\bibitem{MM}
A. Marsiglietti, J. Melbourne. On the entropy power inequality for the R\'enyi entropy of order [0,1].
		Preprint, arXiv:1710.00800.

\bibitem{R-S}
E. Ram, and I. Sason. On R\'enyi Entropy Power Inequalities.
		IEEE Trans. Inform. Theory 62 (2016), no. 12, 6800--6815.
        
\bibitem{R}
O. Rioul. Information theoretic proofs of entropy power inequalities. 
        IEEE Trans. Inform. Theory 57 (2011), no. 1, 33--55.

\bibitem{S-T}
G. Savar\'e, and G. Toscani. The concavity of R\'enyi entropy power. 
        IEEE Trans. Inform. Theory 60 (2014), no. 5, 2687--2693.
        
\bibitem{Sh}
C. E. Shannon. A mathematical theory of communication.
		    Bell System Tech. J. 27, (1948). 379--423, 623--656.
		
\bibitem{St}
A. J. Stam. Some inequalities satisfied by the quantities of 
        information of Fisher and Shannon. Information and Control,   
        2:101--112, 1959.
        
\bibitem{S-V}
S. J. Szarek, and D. Voiculescu. Shannon's entropy power inequality via 
        restricted {M}inkowski sums.
		    Geometric aspects of functional analysis, 1745 (2000),
  	  	Lecture Notes in Math. 257--262. Springer, Berlin.
  		
\bibitem{T}
G. Toscani. An information-theoretic proof of Nash's inequality. 
        (English summary) Atti Accad. Naz. Lincei Cl. Sci. Fis. 
				Mat. Natur. Rend. Lincei (9) Mat. Appl. 24 (2013), no. 1, 83--93. 	
				
\bibitem{V-G}
S. Verd\`u, and D. Guo. A simple proof of the entropy-power inequality. 
        IEEE Trans. Inform. Theory 52 (2006), no. 5, 2165--2166.

\bibitem{V}
C. Villani. A short proof of the "concavity of entropy power''. 
        IEEE Trans. Inform. Theory 46 (2000), no. 4, 1695--1696.
                
\bibitem{W-M}
L. Wang, and M. Madiman. Beyond the entropy power inequality, via rearrangements.
        IEEE Trans. Inform. Theory 60 (2014), no. 9, 5116--5137.
        
\bibitem{W}
G. N. Watson. A Treatise on the Theory of Bessel Functions. 
        Cambridge University Press, Cambridge, 1944.
       


\end{thebibliography}
\end{document}